\newtheorem{theorem}{Theorem}[section]
\newtheorem{lemma}[theorem]{Lemma}
\newtheorem{proposition}[theorem]{Proposition}
\theoremstyle{definition}
\newtheorem{definition}[theorem]{Definition}
\theoremstyle{remark}
\newtheorem{remark}[theorem]{Remark}
\numberwithin{equation}{section}
\newcommand{\Real}{{\mathbb R}}
\newcommand{\eps}{\varepsilon}
\newcommand{\x}{\mathbf{x}}
\newcommand {\hide}[1]{}
\begin{document}
\title[Topological lower bounds for computation trees]
{On topological lower bounds for algebraic computation trees}
\thanks{Communicated by Felipe Cucker}
\thanks{2010 Mathematics Subject Classification 68Q17, 14P25}
\thanks{{\bf Keywords} Complexity lower bounds, algebraic computation trees, semialgebraic sets}
\author{Andrei Gabrielov}
\address{Department of Mathematics,
Purdue University, West Lafayette, IN 47907, USA}
\email{agabriel@math.purdue.edu}
\author{Nicolai Vorobjov}
\address{
Department of Computer Science, University of Bath, Bath
BA2 7AY, England, UK}
\email{nnv@cs.bath.ac.uk}


\begin{abstract}
We prove that the height of any algebraic computation tree for deciding membership in a semialgebraic
set $\Sigma \subset \Real^n$ is bounded from below by
$$\frac{c_1\log ({\rm b}_m(\Sigma))}{m+1} -c_2n,$$
where ${\rm b}_m(\Sigma)$ is the $m$-th Betti number of $\Sigma$ with respect to ``ordinary'' (singular) homology,
and $c_1,\ c_2$ are some (absolute) positive constants.
This result complements the well known lower bound by Yao \cite{Yao} for {\em locally closed} semialgebraic sets
in terms of the total {\em Borel-Moore} Betti number.

We also prove that if $\rho:\> \Real^n \to \Real^{n-r}$ is the projection map, then the height of any tree
deciding membership in $\Sigma$ is bounded from below by
$$\frac{c_1\log ({\rm b}_m(\rho(\Sigma)))}{(m+1)^2} -\frac{c_2n}{m+1}$$
for some positive constants $c_1,\ c_2$.

We illustrate these general results by examples of lower complexity bounds for some specific
computational problems.
\end{abstract}
\maketitle

\section{Introduction}

The {\em algebraic computation tree} is a standard sequential model of computation for deciding membership problems
for semialgebraic sets.
Among various general methods for obtaining lower complexity bounds for this model, one of the most efficient
uses homotopy invariants, Euler characteristic and Betti numbers, as arguments for the bounding functions.
The history of this approach started probably in mid 70s with the work of Dobkin and Lipton \cite{DL}, and
features prominent results such as Ben-Or's \cite{Ben-Or} in 1983 and Yao's \cite{Yao} in 1997.
The present paper is inspired by the latter.
We will discuss the results of \cite{Yao} in some detail.

We assume that the reader is familiar with the concept of the algebraic computation tree, so we give here
just a brief formal description, closely following \cite{BC}.

\begin{definition}
An algebraic computation tree $\mathcal T$ with input variables $X_1, \ldots ,X_n$ taking real
values, is a tree having three types of vertices: computation (outdegree 1), branch (outdegree 3), and leaves
(outdegree 0).
To each vertex $v$ of $\mathcal T$ a variable $Y_v$ is assigned.

With each computation vertex $v$ an expression $Y_v = a \ast b$ is associated, where $\ast \in \{ +,-, \times, / \}$,
and $a, b$ are either real constants, or input variables, or
variables associated with predecessor vertices of $v$, or a combination of these.

At each branch vertex $v$, the variable $Y_v$ is assigned the value which is either a real constant, or an input variable,
or a variable associated with a predecessor vertex.
The three outgoing edges of $v$ correspond to signs $Y_v>0$, $Y_v=0$, $Y_v<0$.

With each leaf $w$ a basic semialgebraic set (called {\em leaf set}) is associated, defined by equations of the kind either
$Y_v >0$, or $Y_v=0$, or $Y_v <0$, for all variables $Y_v$ associated with predecessor branch vertices $v$ of $w$ along
the branch leading from the root to $w$.
The sign of each $Y_v$ is determined by the outgoing edge in the branch vertex.
In addition, each leaf carries a label ``Yes'' or ``No''.
The tree $\mathcal T$ {\em tests membership} in the union of all Yes leaf sets.
\end{definition}

The semantics of this model of computation is straightforward.
On an input $\x=(x_1, \ldots ,x_n) \in \Real^n$, the input variables $X_i$ get the corresponding values $x_i$,
the arithmetic operations are executed in computation vertices $v$ and the real values are obtained by variables
$Y_v$.
At branch vertices $v$ the sign of the value of $Y_v$ is determined, and the corresponding
outgoing edge is chosen.
As a result a certain branch ending up in a leaf $w$ is selected.
The input $\x$ belongs to the semialgebraic set assigned to $w$.
If $w$ is a Yes leaf, then $\x$ is said to be accepted by the tree $\mathcal T$.

It can be assumed without loss of generality \cite{BC}, that there are no divisions used in a tree.

We will be interested in lower bounds on the {\em heights} of algebraic computation trees testing
membership in a given semialgebraic set $\Sigma$.
A detailed outline of the development of lower bounds that depend on topological characteristics of a set
can be found in \cite{Yao} (see also \cite{BC}).
We mention here just two highlights.

The first most important achievement was the proof by Ben-Or \cite{Ben-Or} of the bound
$c_1\log ({\rm b}_0 (\Sigma))-c_2n$, where ${\rm b}_0(\Sigma)$ is the number of connected components of
$\Sigma$, and $c_1,\ c_2$ are some absolute positive constants.
This general bound implies non-trivial, and sometimes tight, lower bounds for specific computational
problems, such as {\em Distinctness} and {\em Knapsack}.

One of the most general results so far in this direction belongs to Yao \cite{Yao}.
Suppose a semialgebraic set $\Sigma$ is locally closed and bounded.
Let ${\rm b}^{BM}(\Sigma)$ be the total Betti number (the sum of all Betti numbers) of $\Sigma$ with respect
to the Borel-Moore homology $H^{BM}_\ast(\Sigma)$.
Yao proved the lower bound
\begin{equation}\label{eq:yao}
c_1\log ({\rm b}^{BM}(\Sigma))-c_2n,
\end{equation}
where $c_1,\ c_2$ are some absolute positive constants.
From this he deduced a tight lower bound for {\em $k$-Distinctness} problem, and other non-trivial lower
bounds for specific problems.

The Borel-Moore homology is a very strong condition, which implies subadditivity of the total Betti number.
Subadditivity is the property on which the whole of the Yao's argument depends.
It is natural to ask whether an analogous bound can be found for the usual, singular, homology theory, which is
applicable to arbitrary (not necessarily locally closed) semialgebraic sets.
Of course, in this case subadditivity is not necessarily true.
Observe that for compact sets Borel-Moore Betti numbers coincide with singular Betti numbers, while for
non-compact locally closed sets these two types of Betti numbers can be incomparable.

In this paper we prove two main theorems.
Firstly, we prove the lower bound
$$\frac{c_1\log ({\rm b}_m(\Sigma))}{m+1} -c_2n,$$
where ${\rm b}_m(\Sigma)$ is the $m$-th Betti number of an arbitrary semialgebraic set $\Sigma$ with respect
to singular homology, and $c_1,\ c_2$ are some absolute positive constants.
Note that this bound depends on an individual Betti number rather than on the sum of Betti numbers.
For Betti numbers of a small (fixed) index $m$ the bound turns into
$c_1 \log ({\rm b}_m(\Sigma)) -c_2n$ which is similar to Yao's bound.
The proof is based on a construction from \cite{GV09} which transforms $\Sigma$ into a compact semialgebraic
set $\Sigma'$ having the same Betti numbers as $\Sigma$ up to a given index $m$.
We then prove that for any algebraic computation tree $\mathcal T$ for $\Sigma$ there is an algebraic
computation tree ${\mathcal T}'$ for $\Sigma'$ having the height not exceeding, up to a multiplicative constant,
$m$ times the height of $\mathcal T$.
It remains to apply Yao's bound to $\Sigma'$.

Our second main result is a lower bound in terms of Betti numbers of the projection $\rho(\Sigma)$ of $\Sigma$
to a subspace, rather than Betti numbers of $\Sigma$ itself.
Note that the topology of the image under a projection may be much more complex than the topology of
the set being projected.
We are not aware of previous lower bounds of this sort.
The bound is
$$\frac{c_1\log ({\rm b}_m(\rho(\Sigma)))}{(m+1)^2} -\frac{c_2n}{m+1}$$
for some positive constants $c_1,\ c_2$, which again should be applied for small (fixed) values of $m$.
The proof uses (implicitly) a spectral sequence associated with the projection map, which allows to bound from above
Betti numbers of the projection of $\Sigma$ in terms of Betti numbers of fiber products by itself of
the compactification of $\Sigma$ \cite{GVZ}.

We illustrate these general results by examples of lower complexity bounds for some specific
computational problems.

\section{Topological tools}

In this section we formulate the results from \cite{GV05, GV09, GVZ} which are used further in this paper.

In what follows, for a topological space $X$, let ${\rm b}_m(X):={\rm rank}\ H_m(X)$ be its $m$-th Betti number
with respect to the singular homology group $H_m(X)$ with coefficients in some fixed Abelian group.
By ${\rm b}(X)$ we denote the {\em total} Betti number of $X$, i.e., the sum
$\sum_{i \ge 0} {\rm b}_i (X)$.

\subsection{Upper bounds on Betti numbers}
Consider a semialgebraic set $S=\{ \x \in \Real^n|\> {\mathcal F}(\x) \}$, where $\mathcal F$ is a Boolean
combination of polynomial equations and inequalities of the kind $h(\x)=0$ or $h(\x)>0$, and
$h \in \Real[x_1, \ldots ,x_n]$.
Suppose that the number of different polynomials $h$ is $s$ and that their degrees do not exceed $d$.

\begin{proposition}[\cite{GV09}, Theorem~6.3]\label{pr:bettibounds}
The $m$-th Betti number of $S$ satisfies
\begin{enumerate}
\item
${\rm b}_m(S)=O(s^2d)^n$;
\item
${\rm b}_m(S)=O((m+1)sd)^n$.
\end{enumerate}
\end{proposition}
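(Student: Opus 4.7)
The strategy is to reduce the bounds on ${\rm b}_m(S)$ for an arbitrary semialgebraic set to the classical Oleinik--Petrovskii--Thom--Milnor type bounds for the total Betti number of a compact basic closed semialgebraic set, by means of two standard ingredients from real algebraic geometry: an infinitesimal $\eps$-deformation that replaces $S$ by a compact set with the same low-degree Betti numbers, and a Mayer--Vietoris (nerve) spectral sequence applied to a cover of that compact set by basic closed pieces.

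Concretely, I would first thicken the defining Boolean formula by infinitesimals: each strict inequality $h>0$ is replaced by $h\ge\eps$, each equality $h=0$ by $|h|\le\eps$, and the whole set is intersected with a large ball $\|\x\|^2\le 1/\eps$. Working over a suitable real closed extension and specializing $\eps\to 0$ produces, by the construction of \cite{GV09}, a compact semialgebraic set $\widetilde S$ that is a union $\bigcup_{i=1}^N S_i$ of compact basic closed semialgebraic sets satisfying ${\rm b}_k(\widetilde S)={\rm b}_k(S)$ for all $k\le m$. The number $N$ of pieces and the polynomials defining each $S_i$ are controlled, up to absolute constants, by the original parameters $s$ and $d$, with each $S_i$ cut out by $O(1)$ polynomials of degree $\le d$ and with $N=O(s)$.

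Next I would apply the \v{C}ech--Mayer--Vietoris spectral sequence attached to the cover $\{S_i\}$, whose $E_1$ page is $E_1^{p,q}=\bigoplus_{|I|=p+1} H_q(S_I)$ with $S_I=\bigcap_{i\in I}S_i$, converging to $H_{p+q}(\widetilde S)$. Each intersection $S_I$ is a compact basic closed semialgebraic set defined by $O(|I|)$ polynomials of degree $\le d$ in $n$ variables, so the classical bound gives ${\rm b}(S_I)=O(|I|d)^n$. Bound~(2) then follows from the observation that $H_m(\widetilde S)$ is fed only by terms with $p+q=m$, hence by intersections with $|I|\le m+1$: a careful summation of $\binom{N}{k}\,O(kd)^n$ for $k\le m+1$, using $N=O(s)$ and the fact that the bound is trivially zero once $m\ge n$, yields $O((m+1)sd)^n$. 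Bound~(1) is obtained by the cruder summation over the whole nerve up to the combinatorial depth controlled by Warren--Alon type counts of realizable sign patterns of the $s$ polynomials, giving $O(s^2d)^n$.

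The main obstacle is the first step. A semialgebraic set that is neither open nor closed can differ topologically from its closure or interior in rather subtle ways, so the $\eps$-thickening must be designed with care so that the limit as $\eps\to 0$ truly preserves the first $m+1$ Betti numbers and does not introduce spurious homology. This is the technical core of the construction of \cite{GV09}, which I would invoke rather than reprove. Once it is in place, the Mayer--Vietoris book-keeping in Steps~2 and~3, although combinatorially delicate for part~(2), is essentially routine.
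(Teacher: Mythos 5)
The paper does not prove Proposition~\ref{pr:bettibounds}; it is cited from \cite{GV09} as Theorem~6.3, and Section~\ref{sub:approx} recalls the underlying construction. Your overall strategy --- replace $S$ by a compact approximant with the same Betti numbers through degree $m$, then bound the approximant --- is indeed the \cite{GV09} route, but two of your concrete claims do not hold up, and both are load-bearing.

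First, the compactification. You describe a one-parameter thickening (replace $h>0$ by $h\ge\eps$, $h=0$ by $|h|\le\eps$, cut off by a ball) and assert that, ``by the construction of \cite{GV09},'' it preserves ${\rm b}_k$ for $k\le m$. That is exactly the step that fails when $S$ is not locally closed, and fixing it is the whole point of \cite{GV09}. The actual device, recalled in Definition~\ref{def:S_delta} and Proposition~\ref{pr:main}, introduces \emph{two} infinitesimals at each of $m+1$ nested levels and forms the union
\[
T_m(S)=S_{\delta_0,\eps_0}\cup\cdots\cup S_{\delta_m,\eps_m},\qquad 0<\eps_0\ll\delta_0\ll\cdots\ll\eps_m\ll\delta_m\ll 1.
\]
Only this union yields an isomorphism $H_k(T_m(S))\cong H_k(S)$ for $k\le m-1$ and an epimorphism at $k=m$; and it is precisely these $m+1$ layers that account for the $(m+1)$ factor at the base of the exponent in bound~(2), which a single-$\eps$ thickening cannot produce.

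Second, the counts. You take $\widetilde S=\bigcup_{i=1}^N S_i$ with $N=O(s)$ and each $S_i$ cut out by $O(1)$ polynomials of degree $\le d$. Neither count is justified. Definition~\ref{def:S_delta}, Case~1, presupposes a representation of $S$ as a disjoint union of basic sets; for a Boolean combination of sign conditions on $s$ polynomials, the number of such conjuncts can be of order $(sd)^{O(n)}$, and each conjunct typically carries $\Theta(s)$ sign conditions, not $O(1)$. With the true orders of magnitude, your $\sum_{k\le m+1}\binom{N}{k}\,O(kd)^n$ estimate no longer visibly collapses to $O((m+1)sd)^n$, nor does the ``cruder summation over the whole nerve'' visibly give $O(s^2d)^n$. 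The Remark following Proposition~\ref{pr:bettibounds} explains that the factors $s$ (bound~(1)) and $m+1$ (bound~(2)) at the base of the exponent are the only overhead relative to the classical $O(sd)^n$; to recover this you must track how the number of defining polynomials grows through the $T_m$ construction and apply a Betti-number bound for compact sets defined by Boolean formulae directly to $T_m(S)$, rather than run a nerve spectral sequence over a cover whose size and per-piece complexity you have underestimated.
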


\begin{remark}
Unlike classical Petrovski-Oleinik-Thom-Milnor bound for basic semialgebraic sets, used in \cite{Yao}, the bounds in
Proposition~\ref{pr:bettibounds} are applicable to arbitrary semialgebraic sets defined by a quantifier-free
formulae.
They are slightly weaker than the classical bound $O(sd)^n$ by a multiplicative factor at the base of
the exponent, namely $s$ in (1) and $m+1$ in (2).
\end{remark}

Further, in the proof of Theorem~\ref{th:generalbound}, we will need the bound (1) from this proposition.
We won't need bound (2) as such but we shall use in an essential way the constructions from \cite{GV09}
used for proving this bound.
We now proceed to describing this technique.

\subsection{Approximation by monotone families}\label{sub:approx}
\begin{definition}\label{def:S_delta}
Let $G$ be a compact semialgebraic set.
Consider a semialgebraic family $\{ S_\delta \}_{\delta >0}$ of
compact subsets of $G$, such that for all $\delta, \delta' \in (0,1)$,
if $\delta' > \delta$, then $S_{\delta'} \subset S_{\delta}$.
Denote $S := \bigcup_{\delta >0} S_{\delta}$.

For each $\delta >0$, let $\{ S_{\delta, \eps} \}$ be a semialgebraic family
of compact subsets of $G$ such that:
\begin{itemize}
\item[(i)]
for all $\eps, \eps' \in (0,1)$, if $\eps' > \eps$, then
$S_{ \delta, \eps} \subset S_{ \delta, \eps'}$;
\item[(ii)]
$S_{\delta}= \bigcap_{\eps >0} S_{\delta, \eps}$;
\item[(iii)]
for all sufficiently small $\delta' >0$ and for all $\eps' >0$, there exists an open in $G$ set $U \subset G$
such that $S_{\delta} \subset U \subset S_{\delta' , \eps'}$.
\end{itemize}
We say that $S$ is {\em represented} by the families $\{ S_\delta \}$ and
$\{ S_{\delta, \eps} \}$ in $G$.
\end{definition}

Consider the following two particular cases.
\medskip

\noindent {\bf Case 1.}\quad Let a semialgebraic set $S$ be given as a disjoint union of {\em basic} semialgebraic
sets (i.e., sets each defined by a conjunction of equations and strict inequalities).
(Note that an algebraic computation tree represents the corresponding set in exactly this way.)
Let $\delta$ and $\eps$ be some positive constants.

Suppose first that $S$ is bounded in $\Real^n$, and take as $G$ a closed ball of a sufficiently large
radius centered at 0.
The set $S_{\delta}$ is the result of the replacement, independently in each basic set in the union, of
all inequalities $h>0$ and $h<0$ by $h \ge \delta$ and $h \le -\delta$ respectively.
The set $S_{\delta, \eps}$ is obtained by replacing, independently in each basic set,
all expressions $h>0$, $h<0$ and $h=0$ by $h \ge \delta$, $h \le -\delta$ and
$h^2 - \eps \le 0$, respectively.
One can easily verify (see \cite{GV09}) that the set $S$, is represented by families $\{ S_{\delta} \}$ and
$\{ S_{\delta, \eps} \}$ in $G$.

Now suppose that $S$ is not necessarily bounded.
In this case one can take the semialgebraic one-point (Alexandrov) compactification of $\Real^n$ as $G$.
Define sets $S_\delta$ and $S_{\delta, \eps}$ as in the bounded case, replacing equations and
inequalities independently in each basic set,
and then taking the conjunction of the resulting formula with $|\x|^2 \le 1/\delta$.
Again, $S$ is represented by $\{ S_{\delta} \}$ and $\{ S_{\delta, \eps} \}$ in $G$.
\medskip

\noindent {\bf Case 2.}\quad Let $\rho:\> \Real^{n+r} \to \Real^n$ be the projection map, and
$S \subset \Real^{n+r}$ be a semialgebraic set, given as a disjoint union of {\em basic} semialgebraic sets.
The set $S$ is represented by families $\{ S_\delta \}$,
$\{ S_{\delta , \eps} \}$ in the compactification of $\Real^{n+r}$ as
described in {\bf Case~1}.
One can easily verify (see \cite{GV09}), that the projection $\rho (S)$ is represented by families
$\{ \rho(S_\delta) \}$, $\{ \rho(S_{\delta , \eps}) \}$ (in the Alexandrov compactification of $\Real^n$ if necessary).
\medskip

Returning to the general case, suppose that a semialgebraic set $S$ is {\em represented} by families
$\{ S_{\delta} \}$ and $\{ S_{\delta, \eps} \}$ in $G$.

For a sequence $\eps_0 , \delta_0 ,\eps_1 , \delta_1 , \ldots ,\eps_m , \delta_m$,
where $m \ge 0$, introduce the compact set
$$T_m(S):=S_{\delta_0,\eps_0}\cup S_{\delta_1,\eps_1} \cup \cdots \cup S_{\delta_m,\eps_m}.$$

Observe that in {\bf Case~2}, we have the equality
\begin{equation}\label{eq:rho}
T_m(\rho(S))=\rho(T_m(S)).
\end{equation}

In what follows, for two real numbers $a$ and $b$ we write $a \ll b$ to
mean ``$a$ is sufficiently smaller than $b$'' (see formal Definition~1.7 in \cite{GV09}).

\begin{proposition}[\cite{GV09}, Theorem~1.5]\label{pr:main}
For any $m \ge 0$, and
$$0<\eps_0 \ll\delta_0\ll\eps_1 \ll \delta_1 \ll \cdots \ll\eps_m \ll \delta_m \ll 1$$
we have
\begin{itemize}
\item[(i)]
for every $0 \le k \le m$, there is an epimorphism $\varphi_k:\> H_k(T_m(S)) \to H_k(S)$,
in particular, ${\rm b}_k(S) \le {\rm b}_k(T_m(S))$;
\item[(ii)]
in {\bf Case~1}, for every $k \le m-1$, the epimorphism $\varphi_k$ is an
isomorphism, in particular, ${\rm b}_k(S) = {\rm b}_k(T_m(S))$.
Moreover, if $m \ge \dim (S)$, then $T_m(S)$ is homotopy equivalent to $S$.
\end{itemize}
\end{proposition}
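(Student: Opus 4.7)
The plan is to study the cover $T_m(S)=A_0\cup\cdots\cup A_m$ with $A_i:=S_{\delta_i,\eps_i}$ by means of its Mayer--Vietoris spectral sequence, together with a homotopical analysis of the multiple intersections $A_I:=\bigcap_{i\in I}A_i$. The chain $\eps_0\ll\delta_0\ll\eps_1\ll\cdots\ll\delta_m$ is engineered so that each $A_I$ is, up to homotopy and in a controlled range of degrees, a tight neighborhood of $S_{\delta_{i_p}}$ in $G$, where $i_p:=\max I$; this will force the spectral sequence to collapse onto its $p=0$ column in degrees $\le m$, reducing the computation of $H_k(T_m(S))$ first to $H_k(A_0)$ and then to $H_k(S)$.

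The key step is a nerve-type lemma producing, for each $I=\{i_0<\cdots<i_p\}$, a map $S_{\delta_{i_p}}\hookrightarrow A_I$ that is an isomorphism on $H_k$ for $k\le m-|I|+1$. Its construction uses property~(iii) of Definition~\ref{def:S_delta}: for each $j<p$, the inequality $\delta_{i_j}\ll\delta_{i_p}$ together with the choice $\eps'=\eps_{i_j}$ produces an open set $U_j$ with $S_{\delta_{i_p}}\subset U_j\subset A_{i_j}$. Intersecting the $U_j$ with $A_{i_p}$ gives a neighborhood of $S_{\delta_{i_p}}$ inside $A_I$, and a combination of property~(ii) with compactness of the family $\{S_{\delta_{i_p},\eps}\}_\eps$ is then used to upgrade this neighborhood to a homology equivalence in the required range. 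In Case~1 the disjoint-union structure $A_i=\bigsqcup_B B_{\delta_i,\eps_i}$ makes each block $B_{\delta_{i_p},\eps_{i_0}}$ a literal deformation retract of $B_{\delta_{i_p}}$ built directly out of the defining polynomials of $B$, which sharpens everything into actual homotopy equivalences.

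Feeding the nerve lemma into the Mayer--Vietoris spectral sequence, the $E_1$-page carries essentially a single copy of $H_q(S_{\delta_0})$ in each row $q\le m$, with the differentials on the totally ordered nerve killing the higher columns in this range. The surviving class in total degree $k\le m$ is therefore represented by $H_k(A_0)\cong H_k(S_{\delta_0})$. Composing with the inclusion-induced map $H_k(S_{\delta_0})\to H_k(S)$, which is surjective because $S=\bigcup_\delta S_\delta$ and singular homology commutes with this monotone directed colimit, yields the epimorphism $\varphi_k$ of part~(i). In Case~1 the retractions above are actual deformation retractions, so $\varphi_k$ becomes an isomorphism for $k\le m-1$; and when $m\ge\dim(S)$, the explicit block-retractions $A_i\to S_{\delta_i}$ can be arranged to be compatible on the overlaps $A_I$ and assembled, via induction on $|I|$, into a global homotopy equivalence $T_m(S)\simeq S$.

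The main obstacle is the nerve lemma. The sets $A_i$ are not nested in any simple way, and $S_{\delta_{i_p}}$ need not lie in the topological interior of $A_{i_p}$, so showing that the candidate neighborhood really does capture the correct range of homology requires coordinating property~(iii) across all $2^{m+1}-1$ subsets $I$ simultaneously and tracking carefully how the scheduled $\ll$-inequalities enforce compatible deformation retractions. Once the nerve lemma is in place, the spectral sequence bookkeeping and the final passage from $H_k(S_{\delta_0})$ to $H_k(S)$ are essentially formal.
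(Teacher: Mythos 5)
This statement is not proved in the paper at all: it is quoted verbatim as [GV09, Theorem~1.5], so there is no ``paper's own proof'' to compare against. Your task is therefore really to reconstruct the proof from \cite{GV09}, and the sketch you give does take a genuinely different route from that reference (which proceeds by an inductive application of Mayer--Vietoris exact sequences and direct comparison of inclusion-induced maps in a carefully scheduled chain of pairs, rather than by a nerve-type spectral sequence argument). Used as a self-contained proof attempt, however, the sketch has two real gaps beyond the one you flag yourself.

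First, the ``nerve lemma'' you isolate -- that $S_{\delta_{\max I}}\hookrightarrow A_I$ induces homology isomorphisms in a range $k\le m-|I|+1$ -- is not merely a technical obstacle; it carries the entire content of the proposition, and property~(iii) alone does not yield it. Property~(iii) gives a single sandwich $S_{\delta}\subset U\subset S_{\delta',\eps'}$; going from that to a homology equivalence of $S_{\delta_{\max I}}$ with the full intersection $A_I$ (which is only a neighborhood of $S_{\delta_{\max I}}$, possibly with extraneous components or handles in higher degrees) requires the semialgebraic Hardt-triviality stabilization of compact monotone families, and the range of degrees in which it succeeds is precisely what limits the final statement to $k\le m-1$ for the isomorphism and $k\le m$ for the surjection. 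Asserting the lemma and then ``feeding it into the spectral sequence'' inverts the logical difficulty. Relatedly, the degree bookkeeping is off: to kill $E_2^{p,q}$ for $p+q=m$ and $p\ge 1$ you need the identification $H_q(A_J)\cong H_q(S_{\delta_0})$ for $|J|=p+2$, which by your own range condition requires $q\le m-p-1$, i.e.\ $p+q\le m-1$. So the spectral-sequence collapse, even granting the nerve lemma, yields an isomorphism only up to degree $m-1$ and a weaker statement at $m$ -- which is actually consistent with parts~(i) and~(ii), but you state the collapse as if it were uniform in $q\le m$.

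Second, the final step -- surjectivity of $H_k(S_{\delta_0})\to H_k(S)$ -- does not follow from ``singular homology commutes with the monotone directed colimit.'' Commuting with colimits tells you that every class in $H_k(S)$ lifts to $H_k(S_\delta)$ for \emph{some} $\delta$, but if that $\delta$ is smaller than $\delta_0$ the lift lands in the bigger set $S_\delta\supset S_{\delta_0}$ and you have no map to $H_k(S_{\delta_0})$. What is actually needed is that, for $\delta_0$ sufficiently small, the inclusion $S_{\delta_0}\hookrightarrow S$ is itself a homology (indeed homotopy) equivalence -- a semialgebraic stabilization fact, not a formal colimit fact. (As a minor point, the deformation-retract sentence in Case~1 is stated backwards: it is $B_{\delta,\eps}$ that should retract onto $B_\delta$, and the mixed indices $B_{\delta_{i_p},\eps_{i_0}}$ versus $B_{\delta_{i_p}}$ obscure which family is being retracted onto which.) None of this means the spectral-sequence strategy cannot be made to work, but as written the proof outsources its two hardest steps to unproved claims.
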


\subsection{Betti numbers of projections}

\begin{definition}
For two maps $f_1:\> X_1 \to Y$ and $f_2:\> X_2 \to Y$ , the {\em fibered product} of $X_1$ and $X_2$
is defined as
$$X_1 \times_Y X_2:=\{ (\x_1,\x_2)\in X_1 \times X_2|\> f_1(\x_1)=f_2(\x_2)\}.$$
\end{definition}

\begin{proposition}[\cite{GVZ}, Theorem~1]\label{pr:map}
Let $f:\> X \to Y$ be a closed surjective semialgebraic map (in particular, $f$ can be the projection
map to a subspace, with a compact $X$).
Then
$${\rm b}_m(Y) \le \sum_{p+q=m} {\rm b}_q(W_p),$$
where
$$W_p:= \underbrace{X \times_Y \cdots \times_Y X}_\text{(p+1) {\rm times}}.$$
\end{proposition}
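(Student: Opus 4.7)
The plan is to extract the bound from a first-quadrant spectral sequence
$$E^1_{p,q}=H_q(W_p)\Longrightarrow H_{p+q}(Y).$$
Once such a convergent spectral sequence is in hand the conclusion is automatic, since the abutment $H_m(Y)$ admits a finite filtration whose graded pieces are subquotients of $E^\infty_{p,q}$ with $p+q=m$, so ${\rm b}_m(Y)\le\sum_{p+q=m}\mathrm{rank}\,E^\infty_{p,q}\le\sum_{p+q=m}{\rm b}_q(W_p)$.

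To build the spectral sequence I would assemble the iterated fibered products into a simplicial semialgebraic space $W_\bullet$ whose $p$-th term is $W_p$, with face maps $d_i\colon W_p\to W_{p-1}$ forgetting the $i$-th factor and degeneracies given by partial diagonals; the relation $f\circ d_0=f\circ d_1$ on $W_1$ is exactly the defining property of the fibered product. The augmentation $W_0=X\xrightarrow{f}Y$ extends to a continuous map $\pi\colon|W_\bullet|\to Y$ from the geometric realization of $W_\bullet$, and the skeletal filtration of $|W_\bullet|$ produces the standard first-quadrant spectral sequence with $E^1_{p,q}=H_q(W_p)$ converging to $H_{p+q}(|W_\bullet|)$.

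The heart of the argument is to show that $\pi$ induces an isomorphism on singular homology, so that the abutment is $H_{p+q}(Y)$ as required. For this step I would invoke Hardt's semialgebraic triangulation theorem to choose compatible triangulations of $X$ and $Y$ making $f$ a simplicial map; over each open simplex of $Y$ the restriction of $f$ is then locally trivial. For a point $y\in Y$ the fiber $\pi^{-1}(y)$ is identified with the geometric realization of the \v{C}ech-type simplicial set $\{f^{-1}(y)^{p+1}\}_{p\ge 0}$, which is the infinite join of $f^{-1}(y)$ with itself and therefore contractible. A Leray-type argument over the stratification of $Y$, using that $f$ is closed to control the fibers of $\pi$ in families, then promotes this pointwise contractibility to a global weak equivalence $|W_\bullet|\simeq Y$.

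The main obstacle is precisely this descent step: without the closedness of $f$, the geometric realization $|W_\bullet|$ need not have the expected topology over $Y$, and the augmentation can fail to be a weak equivalence. The hypothesis ``closed surjective'' (or, equivalently for compact $X$, projection from a compact space) is the minimal input that makes the fiberwise-contractibility argument glue together globally. Once the weak equivalence $\pi$ is in hand, the bound in the proposition is a formal consequence of the spectral sequence described above.
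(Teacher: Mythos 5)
This proposition is cited from Theorem~1 of \cite{GVZ}; the present paper gives no proof of it, so there is no in-paper argument to compare against. Your sketch reproduces the standard descent spectral sequence of a map, which is indeed the mechanism behind the cited result: form the \v{C}ech nerve $W_\bullet$ of $f$ (with $W_p$ the $(p+1)$-fold fiber product), realize it, and extract from the skeletal filtration a first-quadrant spectral sequence with $E^1_{p,q}=H_q(W_p)$; the Betti-number inequality is then a formal consequence, since $\mathrm{rank}\,H_m(Y)=\sum_{p+q=m}\mathrm{rank}\,E^\infty_{p,q}\le\sum_{p+q=m}\mathrm{rank}\,E^1_{p,q}$ once the abutment is identified with $H_*(Y)$. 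You correctly locate all the real work in the descent step, where closedness of $f$ (equivalently properness, for compact $X$) is exactly what lets the pointwise contractibility of the realization of $\{f^{-1}(y)^{p+1}\}_{p\ge 0}$ globalize to a homology isomorphism $|W_\bullet|\to Y$. That step is announced in your sketch (via ``Hardt's theorem'' and ``a Leray-type argument'') rather than carried out, and it is where \cite{GVZ} expends most of its effort using semialgebraic triangulation and finiteness; but as a blind reconstruction of a cited theorem your outline is structurally faithful and the conditional reasoning built on top of it is correct.
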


\section{General lower bounds}

We start with a theorem which immediately follows from an upper bound on the total Betti number
of an arbitrary semialgebraic set in Proposition~\ref{pr:bettibounds}.

\begin{theorem}\label{th:generalbound}
Let $k$ be the height of an algebraic computation tree ${\mathcal T}$
testing membership in a semi-algebraic set $\Sigma \subset \Real^n$.
Then
$$
k = \Omega \left( \frac{\log ({\rm b}(\Sigma))}{n} \right),
$$
where $\rm b (\Sigma)$ is the total Betti number of $\Sigma$.
\end{theorem}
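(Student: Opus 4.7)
The plan is to bound the semialgebraic complexity of $\Sigma$ (number of polynomials and their degrees) in terms of the height $k$ of $\mathcal{T}$, and then invoke Proposition~\ref{pr:bettibounds}(1) to turn this into an upper bound on $\mathrm{b}(\Sigma)$. Reversing the inequality yields the claimed lower bound on $k$.

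\medskip
\noindent\textbf{Step 1: Structural bounds on $\mathcal{T}$.} Since branch vertices have outdegree $3$ and computation vertices have outdegree $1$, a tree of height $k$ has at most $3^k$ leaves and $O(3^k)$ total vertices. Using the assumption that we may take $\mathcal{T}$ to be division-free, an easy induction shows that the variable $Y_v$ attached to a vertex $v$ at depth $d$ is a polynomial in $X_1,\ldots,X_n$ of degree at most $2^d\le 2^k$.

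\medskip
\noindent\textbf{Step 2: Describing $\Sigma$ as a semialgebraic set.} The set $\Sigma$ is the (disjoint) union of the Yes leaf sets, each a basic semialgebraic set cut out by $\le k$ sign conditions on the polynomials $Y_v$ along the corresponding root-to-leaf path. Collecting all polynomials that appear anywhere in the tree, we obtain a family of $s=O(3^k)$ polynomials, each of degree $d \le 2^k$, such that $\Sigma$ is defined by a Boolean combination of equations and strict inequalities on these polynomials.

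\medskip
\noindent\textbf{Step 3: Applying the Betti bound.} Now Proposition~\ref{pr:bettibounds}(1) gives, for every $m$,
\[
\mathrm{b}_m(\Sigma)=O(s^2 d)^n = O\bigl(3^{2k}\cdot 2^k\bigr)^n = 2^{O(kn)}.
\]
Since $\Sigma\subset\Real^n$, Betti numbers vanish for $m>n$, so summing yields $\mathrm{b}(\Sigma)=(n+1)\cdot 2^{O(kn)}=2^{O(kn)}$. Taking logarithms, $\log\mathrm{b}(\Sigma)=O(kn)$, which rearranges to $k=\Omega(\log\mathrm{b}(\Sigma)/n)$.

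\medskip
\noindent\textbf{Expected main difficulty.} There is no genuine obstacle beyond careful bookkeeping: what matters is that Proposition~\ref{pr:bettibounds}(1) applies to arbitrary (not necessarily basic or locally closed) semialgebraic sets, so we do not need to decompose $\Sigma$ by inclusion–exclusion or worry about Borel--Moore vs.\ singular homology here. The only technical care is in Step~1, to ensure that the exponential blow-up of degree ($2^k$) and the exponential number of polynomials ($3^k$) still combine with the factor $n$ in the exponent of Proposition~\ref{pr:bettibounds}(1) to give a bound of the form $2^{O(kn)}$ rather than something worse.
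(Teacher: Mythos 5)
Your proof is correct and follows essentially the same route as the paper: bound the degree ($\le 2^k$) and the number of distinct polynomials ($2^{O(k)}$) in terms of the height, feed these into Proposition~\ref{pr:bettibounds}(1), and take logarithms. Your count of $s=O(3^k)$ via the total number of vertices is slightly tighter than the paper's $k3^k$ (which counts at most $k$ polynomials per Yes leaf times $3^k$ leaves), but the two are interchangeable at the level of the asymptotic bound.
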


\begin{proof}
Since in each computation vertex at most one multiplication can be performed, every polynomial
occurring in the disjunctive normal form defining $\Sigma$ has a degree at most $2^k$.
The number of polynomials in the conjunction defining the set attached to a Yes leaf is at most $k$,
while the number of Yes leaves does not exceed $3^k$.
It follows that the total number of polynomials defining $\Sigma$ is at most $k3^{k}$.
Then, according to Proposition~\ref{pr:bettibounds}, (1), ${\rm b}(\Sigma) \le O(n((k3^{k})^2 2^k)^n)$.
Taking logarithms we get the result.
\end{proof}

\begin{remark}
The bound in the theorem is significantly weaker than Yao's bound (\ref{eq:yao}).
However, as explained in the introduction, it is applicable to any semialgebraic set, not necessarily
a locally closed one.
The upper bound on the total Betti number, used in the proof, is applicable to arbitrary
semialgebraic set, unlike classical Petrovski-Oleinik-Thom-Milnor bounds employed in \cite{Yao}.
\end{remark}

\begin{theorem}\label{th:main}
Let $k$ be the height of an algebraic computation tree ${\mathcal T}$
testing membership in a semi-algebraic set $\Sigma \subset \Real^n$.
Then
$$
k \ge \frac{c_1\log ({\rm b}_m(\Sigma))}{m+1} -c_2n,
$$
where ${\rm b}_m(\Sigma)$ is the $m$-th Betti number of $\Sigma$, and $c_1,\ c_2$ are some positive constants.
\end{theorem}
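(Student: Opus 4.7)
The plan is to convert the given tree $\mathcal{T}$ for $\Sigma$ into an algebraic computation tree $\mathcal{T}'$, of height at most a constant multiple of $(m+1)(k+n)$, testing membership in the compact set $T_m(\Sigma)$ produced by Case~1 of \S\ref{sub:approx}. Yao's bound \eqref{eq:yao} applied to $\mathcal{T}'$ will then give a lower bound in terms of the total Borel--Moore Betti number of $T_m(\Sigma)$; since $T_m(\Sigma)$ is compact, this equals its total singular Betti number, which by Proposition~\ref{pr:main}(i) dominates ${\rm b}_m(\Sigma)$. A short rearrangement will then produce the stated inequality.

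First I would fix a sequence $0 < \eps_0 \ll \delta_0 \ll \cdots \ll \eps_m \ll \delta_m \ll 1$ satisfying the hypotheses of Proposition~\ref{pr:main}. Since $\mathcal{T}$ realises $\Sigma$ as a disjoint union of basic semialgebraic sets, one per Yes leaf of $\mathcal{T}$, we are in Case~1 of \S\ref{sub:approx}; thus each $S_{\delta_i,\eps_i}$ is obtained by replacing, in each basic set, every $h > 0$ by $h - \delta_i \ge 0$, every $h < 0$ by $h + \delta_i \le 0$, and every $h = 0$ by $h^2 - \eps_i \le 0$, with a further conjunct $|\x|^2 \le 1/\delta_i$ in the unbounded case.

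Next I would build $\mathcal{T}'$ from $\mathcal{T}$ in two passes. For each $i$, produce a tree $\mathcal{T}_i$ of height $O(k+n)$ that tests membership in $S_{\delta_i,\eps_i}$ by keeping the same underlying graph and computation vertices of $\mathcal{T}$ but replacing every branch vertex $v$ (originally branching on the sign of $Y_v$) by a short constant-depth gadget that computes $Y_v - \delta_i$, $Y_v + \delta_i$, and $Y_v^2 - \eps_i$ and branches on their signs; since $\eps_i \ll \delta_i^2$, the three approximated conditions are pairwise incompatible, so the gadget directs the input into the original ``$>$'', ``$=$'', or ``$<$'' subtree when the corresponding approximated condition holds, and into a new No leaf in the residual ``gap'' case. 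In the unbounded setting, prepend $O(n)$ vertices to evaluate $|\x|^2$ and one branch on $|\x|^2 - 1/\delta_i$. Then assemble $\mathcal{T}'$ by grafting: identify every No leaf of $\mathcal{T}_i$ with the root of $\mathcal{T}_{i+1}$, while keeping Yes leaves of every $\mathcal{T}_i$ as Yes leaves of $\mathcal{T}'$. By construction $\mathcal{T}'$ accepts $\x$ iff $\x \in S_{\delta_i,\eps_i}$ for some $i$, i.e.\ iff $\x \in T_m(\Sigma)$, and its height is $O((m+1)(k+n))$.

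Finally, since $T_m(\Sigma)$ is compact, hence locally closed and bounded with coinciding Borel--Moore and singular total Betti numbers, Yao's bound \eqref{eq:yao} applied to $\mathcal{T}'$ yields
\[
(m+1)(k + O(n)) \;\ge\; c_1 \log \bigl({\rm b}^{BM}(T_m(\Sigma))\bigr) - c_2 n \;\ge\; c_1 \log \bigl({\rm b}_m(\Sigma)\bigr) - c_2 n,
\]
where the last inequality uses that the total Betti number dominates ${\rm b}_m(T_m(\Sigma))$ and, by Proposition~\ref{pr:main}(i), ${\rm b}_m(T_m(\Sigma)) \ge {\rm b}_m(\Sigma)$. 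Dividing by $m+1$ and absorbing the $O(n)$ overhead into a revised $c_2$ gives the theorem. The main obstacle will be the third paragraph: checking that the branch-vertex gadgets together with their gap No leaves genuinely realise $S_{\delta_i,\eps_i}$ along every path (including those ending at original No leaves, where the complementary side of each original inequality needs to be approximated from the correct direction), and that serial composition of the $\mathcal{T}_i$'s costs only a multiplicative factor $m+1$ in height rather than compounding.
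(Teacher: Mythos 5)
Your proposal follows the paper's proof almost verbatim: you construct, from $\mathcal T$, trees that test membership in each $S_{\delta_i,\eps_i}$ by a constant-depth gadget replacement at the branch vertices, chain them through the No leaves to obtain a tree for $T_m(\Sigma)$ of height $O((m+1)(k+n))$, apply Yao's bound \eqref{eq:yao} to this compact set, and descend to ${\rm b}_m(\Sigma)$ via Proposition~\ref{pr:main}(i). This is exactly the content of Lemma~\ref{le:hight} and the subsequent argument in the paper (the only cosmetic difference is that the paper computes $|\x|^2$ once up front, giving height $O((m+1)k+n)$, but that does not change the final form of the bound), so the approach is essentially the same.
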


\begin{lemma}\label{le:cupcap}
Let ${\mathcal T}_1, \> {\mathcal T}_2$ be algebraic computation trees testing membership in semialgebraic
sets $\Sigma_1$ and $\Sigma_2$ respectively, and having heights $k_1$ and $k_2$ respectively.
Then there is a tree ${\mathcal T}_\cup$ testing membership in $\Sigma_1 \cup \Sigma_2$,
and a tree ${\mathcal T}_\cap$ testing membership in $\Sigma_1 \cap \Sigma_2$, both having heights
at most $O(k_1+k_2)$.
\end{lemma}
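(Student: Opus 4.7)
The plan is to build $\mathcal{T}_\cup$ and $\mathcal{T}_\cap$ by syntactic composition: run $\mathcal{T}_1$ first, and at each of its leaves either terminate with a predetermined label or continue by executing a fresh copy of $\mathcal{T}_2$. I would describe the union case first; the intersection case is symmetric.

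For $\mathcal{T}_\cup$, I would keep $\mathcal{T}_1$ intact, replace each of its Yes leaves by a Yes leaf of $\mathcal{T}_\cup$, and attach a copy of $\mathcal{T}_2$ at each No leaf of $\mathcal{T}_1$, preserving the Yes/No labels of $\mathcal{T}_2$. Before gluing, I would rename all internal variables $Y_v$ of the attached copy of $\mathcal{T}_2$ so that they are disjoint from the internal variables of $\mathcal{T}_1$; the input variables $X_1,\dots,X_n$ are shared. For $\mathcal{T}_\cap$ I would perform the symmetric construction: attach $\mathcal{T}_2$ at each Yes leaf of $\mathcal{T}_1$ and put a No leaf at each No leaf of $\mathcal{T}_1$.

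To verify correctness of the union construction, I would use the fact that the leaves of an algebraic computation tree partition $\Real^n$ by the signs of its branch variables. Inputs reaching a Yes leaf of $\mathcal{T}_1$ form a subset of $\Sigma_1$; inputs reaching a No leaf of $\mathcal{T}_1$ lie in $\Real^n\setminus\Sigma_1$, and those that further reach a Yes leaf of the attached copy of $\mathcal{T}_2$ lie in $\Sigma_2$. Taking the union over all Yes leaves of $\mathcal{T}_\cup$ yields $\Sigma_1\cup\bigl((\Real^n\setminus\Sigma_1)\cap\Sigma_2\bigr)=\Sigma_1\cup\Sigma_2$. The analogous bookkeeping handles the intersection case. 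Each root-to-leaf path in the composite tree is a root-to-leaf path of $\mathcal{T}_1$ (length at most $k_1$) optionally followed by a root-to-leaf path of $\mathcal{T}_2$ (length at most $k_2$), so the height is at most $k_1+k_2=O(k_1+k_2)$.

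The only step that requires more than trivial attention is the gluing itself: I must check that the result is a well-formed algebraic computation tree in the sense of the paper's definition, i.e., every variable referenced at a computation or branch vertex is a real constant, an input variable, or a variable assigned at a strict ancestor. The renaming above prevents name clashes and ensures that whenever the attached copy of $\mathcal{T}_2$ refers to one of its own variables, that variable was assigned at an ancestor within the attached copy, hence at a strict ancestor in the composite tree. This is the main potential obstacle, but it is purely bookkeeping; no new topological or combinatorial ideas are required.
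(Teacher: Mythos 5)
Your construction is exactly the paper's: attach a copy of $\mathcal{T}_2$ at each No leaf of $\mathcal{T}_1$ for the union and at each Yes leaf for the intersection, giving height at most $k_1+k_2$. The paper states this in two sentences; your added bookkeeping (variable renaming, the partition argument for correctness) is just a more careful write-up of the same argument, and it is correct.
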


\begin{proof}
To construct ${\mathcal T}_\cup$, attach a copy of ${\mathcal T}_2$ to each No leaf of the tree ${\mathcal T}_1$.
For ${\mathcal T}_\cap$, attach a copy of ${\mathcal T}_2$ to each Yes leaf of the tree ${\mathcal T}_1$.
\end{proof}

\begin{lemma}\label{le:hight}
Let ${\mathcal T}$ be a tree for $\Sigma$, having height $k$.
Then for any $\ell \ge 0$ there exists a tree ${\mathcal T}_{\ell}$ for $T_{\ell}(\Sigma)$ whose height does not
exceed $k' \le c ((\ell+1) k +n)$ for some positive constant $c$.
\end{lemma}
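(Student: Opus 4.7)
The plan is to decompose $T_\ell(\Sigma)=\bigcup_{i=0}^{\ell} S_{\delta_i,\eps_i}$ and build a tree ${\mathcal T}^{(i)}$ for each $S_{\delta_i,\eps_i}$ separately, then combine the $\ell+1$ trees via iterated application of Lemma~\ref{le:cupcap}. Since that lemma adds heights across a union, the combined height is $O$ of the sum of heights, so to obtain the additive bound $(\ell+1)k+n$ rather than $(\ell+1)(k+n)$, I need each ${\mathcal T}^{(i)}$ to have height $O(k)$ on its own and pay the $O(n)$ overhead coming from the $|\x|^2$ constraint only once, at the root of the final combined tree.

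The tree ${\mathcal T}^{(i)}$ is obtained from ${\mathcal T}$ by replacing each branch vertex by a constant-depth gadget. Let $v$ be a branch vertex of ${\mathcal T}$ with subtrees $T_+$, $T_=$, $T_-$ along the $Y_v>0$, $Y_v=0$, $Y_v<0$ edges. The gadget first tests the sign of $Y_v-\delta_i$: if non-negative, route to the modified $T_+$; else test the sign of $Y_v+\delta_i$: if non-positive, route to the modified $T_-$; else test the sign of $Y_v^2-\eps_i$: if non-positive, route to the modified $T_=$; else route to a No leaf. Because of the sufficient-smallness condition $\eps_i\ll\delta_i^2$, the three replacement conditions $Y_v\ge\delta_i$, $Y_v\le-\delta_i$, $Y_v^2\le\eps_i$ are pairwise disjoint, so this deterministic routing exactly implements the replacement rule that defines $S_{\delta_i,\eps_i}$ from the Yes-leaf basic sets of ${\mathcal T}$ (Case~1 in Section~\ref{sub:approx}). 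Each gadget contributes only $O(1)$ to the depth, so ${\mathcal T}^{(i)}$ has height $O(k)$ in the bounded case.

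In the unbounded case the definition of $S_{\delta_i,\eps_i}$ additionally conjuncts $|\x|^2\le 1/\delta_i$. I would compute $Z:=x_1^2+\cdots+x_n^2$ a single time at the root of the final combined tree, which adds depth $O(n)$, and then insert the cheap test ``$Z\le 1/\delta_i$'' as the first branch of each ${\mathcal T}^{(i)}$. Because $Z$ is computed once and only compared afterwards, each ${\mathcal T}^{(i)}$ still has height $O(k)$, and the expensive quadratic sum is never duplicated across the $\ell+1$ subtrees.

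Stacking the trees via Lemma~\ref{le:cupcap} by attaching ${\mathcal T}^{(i+1)}$ to every No leaf of the partial union of ${\mathcal T}^{(0)},\ldots,{\mathcal T}^{(i)}$ yields a tree for $T_\ell(\Sigma)$ of height $O((\ell+1)k)$; prepending the computation of $Z$ produces the claimed bound $c((\ell+1)k+n)$. The point requiring the most care is the correctness of the branch-vertex gadget: one must verify that the mutual exclusivity granted by $\eps_i\ll\delta_i^2$ guarantees that the deterministic routing assembles precisely $S_{\delta_i,\eps_i}$ rather than a proper subset or superset, and that inputs falling into none of the three replacement regimes at some $v$ are correctly excluded (since they lie outside every modified basic set whose path visits $v$).
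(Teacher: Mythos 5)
Your proposal is correct and takes essentially the same route as the paper's proof: the same constant-depth gadget at each branch vertex (branching on $Y_v-\delta_i$, $Y_v+\delta_i$, $Y_v^2-\eps_i$, with a No leaf for inputs satisfying none of the three conditions), the same one-time $O(n)$-depth computation of $X_1^2+\cdots+X_n^2$ so the ball test costs $O(1)$ in each piece, and the same stacking of the $\ell+1$ modified trees onto No leaves via Lemma~\ref{le:cupcap}. The correctness point you flag (mutual exclusivity of the three modified conditions forcing the deterministic routing to realize exactly the union of modified basic sets) is precisely what the paper's induction on the construction verifies.
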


\begin{proof}
The plan of the proof is as follows.
The construction of ${\mathcal T}_{\ell}$ consists of two stages.
On the first stage we perform the construction for $\ell=0$ and arbitrary $\eps, \delta$,
and get the tree ${\mathcal T}_{\eps, \delta}$.
The height of ${\mathcal T}_{\eps, \delta}$ is not larger than $c$ times the height of ${\mathcal T}$
for a constant $c >0$.
On the second stage we construct ${\mathcal T}_{\ell}$ for an arbitrary $\ell$ by induction.
On the base step, construct the tree ${\mathcal T}_0={\mathcal T}_{\eps_0, \delta_0}$.
Suppose we constructed the tree ${\mathcal T}_{\ell -1}$.
The tree ${\mathcal T}_{\ell}$ is obtained from ${\mathcal T}_{\ell -1}$ by attaching
to {\em each} No leaf of the latter, a copy of the tree ${\mathcal T}_{\eps_\ell, \delta_\ell}$, considering
the leaf as the root of ${\mathcal T}_{\eps_\ell, \delta_\ell}$.

\begin{figure}[hbt]
\vspace*{0.05in}
       \centerline{
          \scalebox{0.35}{
             \includegraphics{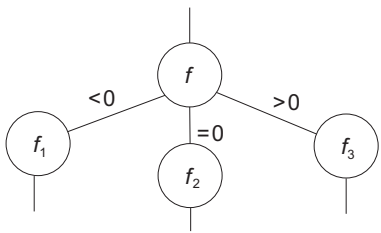}
             }
           }
\vspace*{-0.1in}
\caption{Tree $\mathcal S$.}
\label{fig:1}
\end{figure}

Now we proceed to a more detailed proof.

Let $h:=X_1^2+ \cdots +X_n^2$.
The root $r$ of the tree ${\mathcal T}_{\eps, \delta}$ is a branch vertex with the polynomial
$h- 1/\delta$ attached.
The child of $r$, corresponding to $>0$, is a No leaf.
Take the other two children as roots of two copies of the tree $\mathcal T$.
The construction of ${\mathcal T}_{\eps, \delta}$ now continues identically for both copies, by induction,
as follows.
In ${\mathcal T}$, let $v$ be the closest branch vertex to its root, and let $f$ be the polynomial
attached to $v$ (if such branch vertex does not exist, then  the construction of ${\mathcal T}_{\eps, \delta}$ is completed).
Then the neighbourhood of $v$ in ${\mathcal T}$ looks like the tree $\mathcal S$ on Figure~\ref{fig:1}.
Here $f_1, f_2$ and $f_3$ are polynomials attached to children $v_1,v_2,v_3$ of $v$.
Replace this neighbourhood by the tree ${\mathcal S}'$ on Figure~\ref{fig:2}.
Notice that the leaves of ${\mathcal S}'$, are labelled again by $f_1, f_2, f_3$ while one of the
leaves is a No leaf.
Attach to each leaf of ${\mathcal S}'$, labelled by $f_i$, the subtree of $\mathcal T$ rooted at $v_i$ (unless $v_i$ is a leaf of
$\mathcal T$).
Denote the resulting tree by ${\mathcal T}'$.
This completes the base of the induction.

\begin{figure}[hbt]
\vspace*{0.05in}
       \centerline{
          \scalebox{0.35}{
             \includegraphics{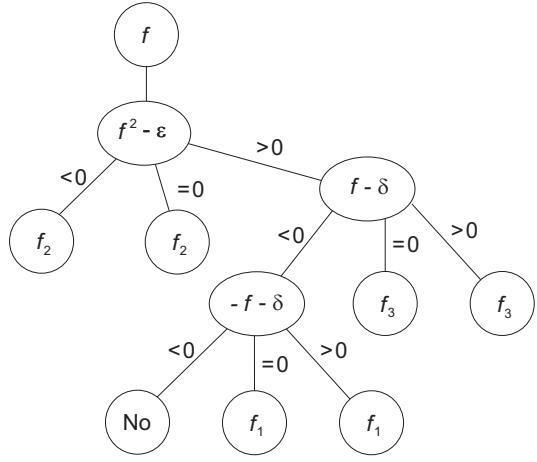}
             }
           }
\vspace*{-0.1in}
\caption{Tree ${\mathcal S}'$.}
\label{fig:2}
\end{figure}

On the next induction step perform the same replacement operation, as on the base step, for each subtree of
${\mathcal T}'$ rooted at a leaf of ${\mathcal S}'$ which is not a No leaf.
If for a leaf of ${\mathcal S}'$ no such subtree exists, i.e., vertex $v_i$ is a leaf of $\mathcal T$,
then this vertex is taken as a leaf of ${\mathcal T}_{\eps, \delta}$,
it is a Yes leaf if and only if $v_i$ is a Yes leaf in $\mathcal T$.
Denote the results of replacements again by ${\mathcal S}'$, and the resulting tree again by ${\mathcal T}'$.

Further induction steps are performed in the same fashion, by applying the replacement operation, described at the base step,
to subtrees of ${\mathcal T}'$ rooted at leaves of the trees ${\mathcal S}'$ obtained on the previous induction step.
The construction of ${\mathcal T}_{\eps, \delta}$ is completed when all leaves of trees ${\mathcal S}'$
become leaves of ${\mathcal T}_{\eps, \delta}$.

Note that the height of ${\mathcal T}_{\eps, \delta}$ is not larger than $c$ times the height of
${\mathcal T}$ for a constant $c>0$.

Now we prove by induction on the construction that ${\mathcal T}_{\eps, \delta}$ is a tree testing
membership in $\Sigma_{\eps, \delta}$ (recall the notation from Case~1, Section~\ref{sub:approx}).
Observe that either $h - 1/\delta=0$ or $h- 1/\delta<0$ is present in the definition of any Yes leaf set.
Assume, as before, that in ${\mathcal T}$ the vertex $v$ is the closest branch vertex to the root,
and $f$ is the polynomial attached to $v$.
Observe that each leaf set of ${\mathcal T}$, in particular each Yes leaf set, is of the kind either
$\{ f=0, \ldots \}$, or $\{ f>0, \ldots \}$, or $\{ f<0, \ldots \}$.
In the tree ${\mathcal T}'$, on the base step of the construction of ${\mathcal T}_{\eps, \delta}$,
the leaf $\{ f=0, \ldots \}$ will be replaced by two leaves, $\{ f^2- \eps=0, \ldots \}$ and
$\{ f^2- \eps<0, \ldots \}$, the leaf $\{ f>0, \ldots \}$ -- by two leaves, $\{ f- \delta=0, \ldots \}$ and
$\{ f- \delta >0, \ldots \}$, while the leaf $\{ f<0, \ldots \}$ -- by two leaves, $\{ -f- \delta=0, \ldots \}$
and $\{ -f- \delta >0, \ldots \}$.
It follows that $\{ f=0, \ldots \} \subset \Sigma$ if and only if $\{ f^2- \eps \le 0, \ldots \}$ is a subset of
the set tested by ${\mathcal T}'$, and similar for sets $\{ f>0, \ldots \}$ and $\{ f<0, \ldots \}$.
Proceeding by induction, we conclude that $\Sigma_{\eps, \delta}$ is the set tested by
${\mathcal T}_{\eps, \delta}$.

Now construct ${\mathcal T}_{\ell}$ for arbitrary $\ell$ by induction.
On the base step, start with the path of $O(n)$ computation vertices at the end
of which the polynomial $h=X^2_1+ \cdots +X^2_n$ is computed.
Continue with the tree ${\mathcal T}_{\eps_0, \delta_0}$.
The result of these two steps is the tree ${\mathcal T}_0$.
Suppose we constructed the tree ${\mathcal T}_{\ell -1}$ for $\ell \ge 1$.
The tree ${\mathcal T}_{\ell}$ is obtained from ${\mathcal T}_{\ell -1}$ by attaching
to {\em each} No leaf of the latter, the tree ${\mathcal T}_{\eps_\ell, \delta_\ell}$, considering the leaf
as the root of ${\mathcal T}_{\eps_\ell, \delta_\ell}$.
By Lemma~\ref{le:cupcap}, the result is indeed ${\mathcal T}_{\ell}$.

Obviously the height of ${\mathcal T}_{\ell}$ does not exceed $k' \le c ((\ell +1)k +n)$ for a constant $c>0$.
\end{proof}

\begin{proof}[Proof of Theorem~\ref{th:main}]
By Lemma~\ref{le:hight}, $k' \le c((m+1)k+n)$ for a constant $c>0$.
By (\ref{eq:yao}), since $T_{m}(\Sigma)$ is compact,
$$c((m+1)k+n) \ge c_1\log({\rm b}((T_{m}(\Sigma))))-c_2n$$
for some positive constants $c_1,\ c_2$.
Hence, for the $m$-th Betti number,
$$c((m+1)k+n) \ge c_1\log({\rm b}_m((T_{m}(\Sigma))))-c_2n.$$
It follows, by Proposition~\ref{pr:main}, that
$$c((m+1)k+n) \ge c_1\log({\rm b}_m(\Sigma))-c_2n.$$
Hence, the theorem.
\end{proof}

\section{Projections}

\begin{theorem}\label{th:proj}
Let $k$ be the height of an algebraic computation tree ${\mathcal T}$
testing membership in a semi-algebraic set $\Sigma \subset \Real^n$.
Let $\rho:\> \Real^n \to  \Real^{n-r}$ be the projection map.
Then
\begin{equation}
k \ge  \frac{c_1\log ({\rm b}_m(\rho(\Sigma)))}{(m+1)^2} -\frac{c_2n}{m+1}
\end{equation}
for some positive constants $c_1,\ c_2$.
\end{theorem}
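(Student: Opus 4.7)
The plan is to reduce to Yao's bound (\ref{eq:yao}) applied to the iterated fibered products appearing in Proposition~\ref{pr:map}. First, since $T_m(\Sigma)$ is compact, equation (\ref{eq:rho}) gives $T_m(\rho(\Sigma)) = \rho(T_m(\Sigma))$, and Proposition~\ref{pr:main}(i) applied to $\rho(\Sigma)$ yields
$$
{\rm b}_m(\rho(\Sigma)) \le {\rm b}_m(T_m(\rho(\Sigma))) = {\rm b}_m(\rho(T_m(\Sigma))).
$$
Applying Proposition~\ref{pr:map} to the restriction $\rho|_{T_m(\Sigma)}$, which is closed (its source is compact) and surjective onto its image, then gives
$$
{\rm b}_m(\rho(T_m(\Sigma))) \le \sum_{p+q=m} {\rm b}_q(W_p),
$$
where $W_p$ is the $(p+1)$-fold fibered product of $T_m(\Sigma)$ with itself over $\rho$.

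Next I would bound each ${\rm b}_q(W_p)$ by constructing a computation tree for $W_p$ and applying Yao's bound directly. Assuming without loss of generality that $\rho$ is a coordinate projection, every point of $W_p$ is uniquely described by the common image $\y \in \Real^{n-r}$ together with the $p+1$ fibre coordinates $\z_0, \ldots, \z_p \in \Real^r$, so $W_p \subset \Real^{n+pr}$ is compact. Lemma~\ref{le:hight} applied with $\ell=m$ produces a tree ${\mathcal T}_m$ for $T_m(\Sigma)$ of height $k' \le c((m+1)k + n)$. Running ${\mathcal T}_m$ in turn on each pair $(\y,\z_i)$ and combining the answers by iterated use of Lemma~\ref{le:cupcap} yields a tree testing membership in $W_p$ of height $O((p+1)k')$. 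Since $W_p$ is compact, its Borel--Moore and singular Betti numbers coincide, so Yao's bound (\ref{eq:yao}), applied in ambient dimension $n + pr \le (m+1)n$, gives
$$
\log {\rm b}_q(W_p) \le O\bigl((p+1)k' + (n+pr)\bigr) \le O\bigl((m+1)^2 k + (m+1)n\bigr).
$$

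Summing over the $m+1$ pairs $p+q=m$ adds only a $\log(m+1)$ term, so the same asymptotic bound holds for $\log {\rm b}_m(\rho(\Sigma))$; rearranging for $k$ produces exactly the claimed inequality. The principal obstacle is the bookkeeping needed to obtain precisely $(m+1)^2$ in the denominator: one factor of $m+1$ comes from the tree-height blow-up in Lemma~\ref{le:hight}, and the second factor comes from the $(p+1)$-fold product structure of $W_p$. A subsidiary verification is that $W_p$ really can be realized inside $\Real^{n+pr}$ rather than in the larger ambient space $\Real^{n(p+1)}$ in which the definition formally places it; this rests only on adapting coordinates so that $\rho$ becomes a projection onto the first $n-r$ coordinates.
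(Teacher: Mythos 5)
Your proof is correct and follows essentially the same path as the paper's: Proposition~\ref{pr:map} to pass to the fibered products $W_p$, the iterated-tree construction for $W_p$ (which is exactly the paper's Lemma~\ref{le:proj}, built from Lemma~\ref{le:hight} and Lemma~\ref{le:cupcap}), Yao's bound~(\ref{eq:yao}), and then equation~(\ref{eq:rho}) with Proposition~\ref{pr:main} to return to ${\rm b}_m(\rho(\Sigma))$. The only cosmetic differences are that you bound each $\log{\rm b}_q(W_p)$ individually where the paper singles out the index $\nu$ maximizing ${\rm b}(W_p)$, and that you make explicit the realization of $W_p$ inside $\Real^{n+pr}$, which the paper uses implicitly in its variable count.
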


Let
$$
W_p:=\underbrace{T_m(\Sigma) \times_{\rho(T_m(\Sigma))} \cdots \times_{\rho(T_m(\Sigma))}  T_m(\Sigma)
}_\text{(p+1) {\rm times}}.
$$

\begin{lemma}\label{le:proj}
Let ${\mathcal T}$ be a tree for $\Sigma$, having height $k$.
Then there exists a tree ${\mathcal T}^W_{m}$ for $W_p$ whose height does not
exceed $ c (p+1)((m+1)k +n)$ for some positive constant $c$.
\end{lemma}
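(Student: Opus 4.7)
The plan is to assemble $\mathcal{T}^W_m$ from $p+1$ copies of the tree for $T_m(\Sigma)$ produced by Lemma~\ref{le:hight}, together with some equality tests that enforce the fibered product condition.

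First, I would apply Lemma~\ref{le:hight} to $\mathcal{T}$ to get an algebraic computation tree $\mathcal{T}_m$ testing membership in $T_m(\Sigma)$ of height at most $c_0((m+1)k+n)$ for some constant $c_0>0$. For each $i=0,1,\ldots,p$, let $\mathcal{T}_m^{(i)}$ be a copy of $\mathcal{T}_m$ operating on a fresh block of input variables $(X_1^{(i)},\ldots,X_n^{(i)})$, so that $\mathcal{T}_m^{(i)}$ tests whether the $i$-th block $\mathbf{x}_i$ lies in $T_m(\Sigma)$.

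Second, I would combine these $p+1$ trees into a single tree $\mathcal{T}^\cap$ testing the conjunction $\mathbf{x}_0,\ldots,\mathbf{x}_p\in T_m(\Sigma)$ by iterating the intersection construction from Lemma~\ref{le:cupcap}: attach $\mathcal{T}_m^{(1)}$ at each Yes leaf of $\mathcal{T}_m^{(0)}$, then attach $\mathcal{T}_m^{(2)}$ at each Yes leaf of the result, and so on. After $p+1$ such stages the height is at most $(p+1)c_0((m+1)k+n)$, up to an absolute constant factor.

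Third, since $\rho$ is a coordinate projection (without loss of generality onto, say, the last $n-r$ coordinates), the condition $\rho(\mathbf{x}_0)=\rho(\mathbf{x}_i)$ amounts to $n-r$ linear equations $X^{(0)}_{r+j}-X^{(i)}_{r+j}=0$. At each Yes leaf of $\mathcal{T}^\cap$ I would append a straight path that performs, for each $i=1,\ldots,p$ and each $j=1,\ldots,n-r$, one computation vertex computing the difference and one branch vertex testing its sign; any $\neq 0$ branch leads to a No leaf and the final $=0$ branch leads to a Yes leaf. This appends at most $2p(n-r) = O(pn)$ to the height. The resulting tree accepts exactly those $(\mathbf{x}_0,\ldots,\mathbf{x}_p)$ with all $\mathbf{x}_i\in T_m(\Sigma)$ and $\rho(\mathbf{x}_0)=\cdots=\rho(\mathbf{x}_p)$, which is precisely $W_p$.

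Putting the two contributions together, the height is bounded by $(p+1)c_0((m+1)k+n)+O(pn)\le c(p+1)((m+1)k+n)$ for a suitable $c>0$, since $pn\le(p+1)n\le(p+1)((m+1)k+n)$. There is no serious obstacle here: once Lemmas~\ref{le:hight} and~\ref{le:cupcap} are in hand, the only point to verify is that enforcing the fibered product structure reduces to testing coordinate-wise equalities, which costs only $O(pn)$ in the height and is therefore absorbed into the leading term.
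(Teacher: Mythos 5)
Your proof is correct and follows the same overall strategy as the paper: apply Lemma~\ref{le:hight} to get a tree $\mathcal{T}_m$ for $T_m(\Sigma)$, then iterate Lemma~\ref{le:cupcap} to take the conjunction over $p+1$ copies. The one place where you diverge is in the encoding of $W_p$. The paper exploits the fact that $\rho$ is a coordinate projection from the start: since all $p+1$ factors must have the same first $n-r$ coordinates, it writes $W_p$ in a collapsed set of input variables $X_1,\ldots,X_{n-r}, Y_{i,n-r+1},\ldots,Y_{i,n}$ (the shared coordinates appearing only once), and simply runs copies of $\mathcal{T}_m$ on the variables $(X_1,\ldots,X_{n-r},Y_{i,n-r+1},\ldots,Y_{i,n})$. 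This makes the fibered-product condition hold automatically, so no equality tests are needed and the ambient dimension is roughly $n+pr$. You instead keep all $(p+1)n$ input variables independent and append $O(pn)$ branch vertices to enforce $\rho(\mathbf{x}_0)=\cdots=\rho(\mathbf{x}_p)$. That is perfectly valid: the extra cost $O(pn)$ is absorbed into $c(p+1)((m+1)k+n)$ as you note, and the larger ambient dimension $n(p+1)$ only affects the constants when Yao's bound is invoked in Theorem~\ref{th:proj}, not the asymptotics. So your version is a small variation, slightly less economical but arguably more transparent since it tests the definition of the fibered product literally rather than building it into the variable sharing.
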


\begin{proof}
Lemma~\ref{le:hight} implies that there is a tree ${\mathcal T}_m$ for $T_m(\Sigma)$ having the height not
exceeding $c((m+1) k+n)$.
The problem of membership in $W_p$ has input variables
$$X_1, \ldots ,X_{n-r}, Y_{1, n-r+1}, \ldots , Y_{1, n}, \ldots , Y_{p, n-r+1}, \ldots , Y_{p, n}.$$

Construct the tree ${\mathcal T}^W_{m}$ inductively, starting with a copy of ${\mathcal T}_m$ with
input variables $X_1, \ldots ,X_{n-r}, Y_{1, n-r+1}, \ldots , Y_{1, n}$.
Then, using Lemma~\ref{le:cupcap}, attach to each Yes leaf another copy of ${\mathcal T}_m$ with input variables
$X_1, \ldots ,X_{n-r}, Y_{2, n-r+1}, \ldots , Y_{2, n}$, and so on.
The height of the resulting tree ${\mathcal T}^W_{m}$ is at most $p+1$ times the height of the tree
${\mathcal T}_m$, as required.
\end{proof}

\begin{proof}[Proof of Theorem~\ref{th:proj}]
According to Proposition~\ref{pr:map},
\begin{equation}\label{eq:proj}
{\rm b}_m(\rho(T_m(\Sigma))) \le \sum_{p+q=m}{\rm b}_q(W_p).
\end{equation}
Let ${\rm b}(W_\nu):= \max_{0 \le p \le m}{\rm b}(W_p)$, and $k'$ be the height of a tree for $W_\nu$.

Since $W_\nu$ is compact, by (\ref{eq:yao}), we have
$$k' \ge c'_1 \log ({\rm b}(W_\nu)) - c'_2(n+\nu r)$$
for some positive constants $c'_1,\ c'_2$, thus, replacing $k'$ by a larger number according to Lemma~\ref{le:proj},
and using $m \ge \nu$, we get
$$(m+1)^2k+(m+1)n \ge c''_1 \log({\rm b}(W_\nu))- c''_2(n+(m+1)r)$$
for some positive constants $c''_1,\ c''_2$.
But
$$
\sum_{p+q=m}{\rm b}_q(W_p) \le m\ {\rm b}(W_\nu),
$$
so using (\ref{eq:proj}) we have
$$(m+1)^2k+(m+1)n \ge c''_1 (\log({\rm b}_m(\rho(T_m(\Sigma)))- \log m)- c''_2(n+(m+1)r).$$
Hence,
$$k \ge \frac{c_1 \log({\rm b}_m(\rho(T_m(\Sigma)))}{(m+1)^2}-\frac{c_2n}{m+1}$$
for some positive constants $c_1,\ c_2$.

According to (\ref{eq:rho}), $\rho(T_m(\Sigma))=T_m(\rho(\Sigma))$, while, by Proposition~\ref{pr:main},
$${\rm b}_m(T_m(\rho(\Sigma))) \ge {\rm b}_m(\rho(\Sigma)).$$
It follows that
$$k \ge \frac{c_1 \log({\rm b}_m(\rho(\Sigma)))}{(m+1)^2}-\frac{c_2n}{m+1}.$$
\end{proof}

\section{Applications}

In this section we apply the general bounds from Theorems~\ref{th:main} and \ref{th:proj} to examples of
specific computational problems.
These problems admit obvious variations.

\subsection*{``Parity of integers''}

This is the following computational problem.
\medskip

{\em Let $m$ be a positive integer.
For  given $n$ real numbers $x_1, \ldots ,x_n$ such that $1 \le x_i \le m$ for all $i$,
decide whether the following property is true: either all $x_i$ are integer or exactly two
of them are not integer.}
\medskip

Observe that complexity of this problem has an {\em upper bound} $O(n \log m)$:
the computation tree for each $x_i$ checks whether it coincides with one of the numbers $1, \ldots ,m$ using
binary search.

To obtain a lower bound, consider the integer lattice $\{1, \ldots, m\}^n$ in $\Real^n$ and let
$\Sigma$ be the union of all open 2-dimensional squares and all vertices.
Then the problem is equivalent to deciding membership in $\Sigma$.
Observe that $\Sigma$ is not locally closed.
It is homotopy equivalent to a 2-plane with $\Omega ( m^n)$ punctured points,
so $b_1(\Sigma) = \Omega ( m^n)$.
By Theorem~\ref{th:main}, the height of any algebraic computation tree testing membership in $\Sigma$
is $\Omega (n \log m)$.

\subsection*{``Crossing number''}
Let $\Sigma$ be a smooth connected bounded semialgebraic curve in $\Real^n$.
Then $\Sigma$ is a (smooth) embedding of either the circle $S^1$ or the interval $(0,1)$ into $\Real^n$
(for $n=3$ and a circle this is a knot).
The total Betti number of $\Sigma$ is at most 2.

Observe that the image under the projection of $\Sigma$ onto a generic 2-dimensional linear subspace has
only double points as singular points.

The {\em crossing number} $C(\Sigma)$ of $\Sigma$ is the maximal number of singular points of the image of the
projection over all generic 2-dimensional linear subspaces.

\begin{theorem}
The complexity of membership in $\Sigma$ is at least $c_1 \log C(\Sigma)-c_2n$ for some positive constants
$c_1,\ c_2$.

\end{theorem}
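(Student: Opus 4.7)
The plan is to apply Theorem~\ref{th:proj} with $m=1$ to a projection that realizes the crossing number. First, by a linear change of coordinates in $\Real^n$ (which preserves, up to a bounded multiplicative overhead, the height of any algebraic computation tree for $\Sigma$), we may assume that a generic $2$-dimensional linear subspace $L$ achieving $C(\Sigma)$ singular points in the image is the coordinate plane $\{X_3=\cdots=X_n=0\}$, so that $\rho:\Real^n\to\Real^2$ is the standard coordinate projection and the composition $\rho|_\Sigma$ has only transverse double points, exactly $C=C(\Sigma)$ of them. Genericity also lets us assume that none of these double points coincides with the endpoint(s) of the parametrizing interval (when $\Sigma$ is an embedded copy of $(0,1)$).

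The main step is to bound $\mathrm{b}_1(\rho(\Sigma))$ from below in terms of $C$. The image $\rho(\Sigma)$ carries the structure of a graph-like space: the preimages in the parameter circle/interval of the $C$ nodes form $2C$ points, dividing the parameter space into $2C$ arcs (if $\Sigma\cong S^1$) or $2C+1$ arcs (if $\Sigma\cong(0,1)$). In the first case, $\rho(\Sigma)$ is a connected $4$-regular graph with $C$ vertices and $2C$ edges, hence $\chi=-C$ and $\mathrm{b}_1(\rho(\Sigma))=C+1$. In the second case, $2C-1$ of the arcs connect two nodes, while the two extremal arcs have a node as one endpoint and a missing (open) endpoint as the other; these two half-open tails can be deformation-retracted onto their nodal endpoints, so $\rho(\Sigma)$ is homotopy equivalent to a connected graph on $C$ vertices with $2C-1$ edges, giving $\mathrm{b}_1(\rho(\Sigma))=C$. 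In either case $\mathrm{b}_1(\rho(\Sigma))\ge C$.

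Now apply Theorem~\ref{th:proj} with $m=1$ and $n-r=2$:
$$
k\;\ge\;\frac{c_1\log \mathrm{b}_1(\rho(\Sigma))}{(1+1)^2}-\frac{c_2n}{1+1}\;\ge\;\frac{c_1}{4}\log C(\Sigma)-\frac{c_2}{2}n,
$$
which, after renaming constants, gives the claimed lower bound $c_1\log C(\Sigma)-c_2 n$.

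The routine but delicate part is the Betti number computation for the embedded-interval case: one must argue that the open endpoint(s) of the parametrization contribute no $H_1$ because their neighborhoods in $\rho(\Sigma)$ are half-open arcs, hence contractible and strongly deformation retractable onto the first interior node reached along the parameter. The rest of the argument is either a direct $\chi$-count on a finite graph or an invocation of genericity for the choice of projective direction $L$; neither step poses any real obstacle beyond bookkeeping.
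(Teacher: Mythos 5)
Your proposal is correct and follows the same overall strategy as the paper: bound $\mathrm{b}_1(\rho(\Sigma))$ from below by (roughly) $C(\Sigma)$ and then invoke Theorem~\ref{th:proj} with $m=1$. The one genuinely different ingredient is how you get the Betti number bound. The paper counts connected components of the \emph{complement} of $\rho(\Sigma)$ in the plane and passes to $\mathrm{b}_1(\rho(\Sigma))$ by Alexander duality, whereas you compute $\mathrm{b}_1(\rho(\Sigma))$ directly as $1-\chi$ of the image, viewed as a graph with the $C$ nodes as vertices and the arcs between consecutive node-preimages as edges, treating the half-open tails in the $(0,1)$ case by a deformation retraction. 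Your count ($C+1$ for the circle, $C$ for the arc) is in fact the cleaner one --- the paper's phrasing ``the number of connected components is the same as $\mathrm{b}_1$'' has the usual Alexander-duality off-by-one, which is harmless for the bound --- and your explicit treatment of the open endpoints is a point the paper leaves implicit. Either route buys the same estimate $\mathrm{b}_1(\rho(\Sigma))\ge C(\Sigma)$.

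One caveat on your reduction to a coordinate projection: prepending a linear change of coordinates to a tree costs an \emph{additive} $O(n^2)$ (computing $n$ linear forms in $n$ variables), not a ``bounded multiplicative overhead,'' so as written your argument would only yield $c_1\log C(\Sigma)-c_2n^2$. The paper silently faces the same issue, since Theorem~\ref{th:proj} and Lemma~\ref{le:proj} are formulated for the standard coordinate projection while the crossing number is realized on an arbitrary generic $2$-plane. The clean fix is not to change coordinates on all of $\Real^n$ but to note that the proof of Lemma~\ref{le:proj} extends to an arbitrary surjective linear map $\rho:\Real^n\to\Real^2$: in the tree for the fibered product one only needs to additionally test the two linear equations $\rho(\x_i)=\rho(\x_0)$ for each extra copy, which adds $O(pn)$ computation vertices and leaves the final bound in the form $c_1\log C(\Sigma)-c_2n$.
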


\begin{proof}
Let $\rho (\Sigma)$ be the image of $\Sigma$ under the projection to the plane on which the crossing number
is realized.
Then $C (\Sigma)$ is less by 2 (if $\Sigma$ is an embedding of $S^1$), or otherwise by 1, than the
number of connected components of the complement to $\rho (\Sigma)$ in the plane.
By Alexander duality, the number of connected components is the same as ${\rm b}_1(\rho( \Sigma))$,
hence the lower bound follows from Theorem~\ref{th:proj}.
\end{proof}

\subsection*{Acknowledgements}
Andrei Gabrielov was partially supported by NSF grant DMS-1161629.


\begin{thebibliography}{99}
\bibitem{Ben-Or} M. Ben-Or, Lower bounds for algebraic computation trees, in Proceedings of 15th Annual ACM Symposium
on Theory of Computing 1983, 80-86.

\bibitem{BC} P. B\"urgisser, F. Cucker, Variations by complexity theorists on three themes of Euler, B\'ezout,
Betti, and Poincar\'e, in Complexity of Computations and Proofs (Jan Krajicek ed.), Quaderini di
Matematica 13 (2005), 73-152.

\bibitem{DL} D. Dobkin and R.J. Lipton, On the complexity of computations under varying sets of primitives,
J. Comput. Syst. Sci. 18 (1979), 86-91.

\bibitem{GV05} A. Gabrielov, N. Vorobjov, Betti numbers of semialgebraic sets defined by quantifier-free formulae,
Discrete Comput. Geom. 33 (2005), 395-401.

\bibitem{GV09} A. Gabrielov, N. Vorobjov, Approximation of definable sets by compact families,
and upper bounds on homotopy and homology, J. London Math. Soc. 80 (2009), 35-54.

\bibitem{GVZ} A. Gabrielov, N. Vorobjov, T. Zell, Betti numbers of semialgebraic and sub-Pfaffian sets,
J. London Math. Soc. 69, part 1 (2004), 27-43.

\bibitem{Yao} A.C.C. Yao, Decision tree complexity and Betti numbers, J. Comput. Syst. Sci. 55 (1997), 36-43.


\end{thebibliography}
\end{document}